\theoremstyle{definition}
\newtheorem{defn}{Definition}
\theoremstyle{remark}
\newtheorem*{rk}{Remark}
\theoremstyle{plain}
\newtheorem{claim}[defn]{Claim}
\newtheorem{lemma}[defn]{Lemma}\newcommand{\kth}[1]{${#1}^{\textrm{th}}$}
\begin{document}
\singlespacing
\title{Modelling time to event with observations made at arbitrary times} 
\author{MATTHEW SPERRIN$^\ast$\\[4.1pt]
{\it \small Department of Mathematics and Statistics, Lancaster University, UK}
\\[2pt]
{\small m.sperrin@lancaster.ac.uk}\\[8.1pt]
IAIN BUCHAN\\[4.1pt]
{\it \small North-west Institute for Bio-Health Informatics,}\\ {\it \small School of Community Based Medicine,}\\
{ \it \small University of Manchester, UK}
\\
}
\maketitle

\pagestyle{myheadings}
\headsep=25pt

\markboth{M. Sperrin and I. Buchan}{Modelling time to event with observations made at arbitrary times}

\maketitle
\renewcommand{\thefootnote}{\fnsymbol{footnote}}
\footnotetext{$^\ast$ To whom correspondence should be addressed.}

\doublespacing
\begin{abstract}
We introduce new methods of analysing time to event data via extended versions of the proportional hazards and accelerated failure time (AFT) models. In many time to event studies, the time of first observation is arbitrary, in the sense that no risk modifying event occurs. This is particularly common in epidemiological studies. We show formally that, in these situations, it is not sensible to take the first observation as the time origin, either in AFT or proportional hazards type models. Instead, we advocate using age of the subject as the time scale. We account for the fact that baseline observations may be made at different ages in different patients via a two stage procedure. First, we marginally regress any potentially age-varying covariates against age, retaining the residuals. These residuals are then used as covariates in the fitting of either an AFT model or a proportional hazards model. We call the procedures residual accelerated failure time (RAFT) regression and residual proportional hazards (RPH) regression respectively. We compare standard AFT with RAFT, and demonstrate superior predictive ability of RAFT in real examples. In epidemiology, this has real implications in terms of risk communication to both patients and policy makers.
\end{abstract}

Keywords: accelerated failure time, age, proportional hazards, survival analysis, time origin, time varying covariates.

\section{Introduction}
Two common methods for analysing time to event data are proportional hazards models \citep[see, for example,][]{cox84} and accelerated failure time (AFT) models \citep[see, for example,][]{wei92}. When constructing these models, a suitable time scale must be chosen. For example, time may be measured as age of the subject, or on the other hand it may be measured as the subsequent time to event after an intervention or observation. In the latter case, age at study entry is often included as a covariate. If no risk-modifying event occurs at the time of first observation, this time is somewhat arbitrary \citep{korn97,thiebaut04}. Despite this, time-in-study is often used as the time scale in epidemiological studies.

The choice of time origin in survival analysis can be a difficult issue, but this choice is an essential part of any survival analysis \citep{cox84}. It can be seen as a question of aligning individuals under consideration in a sensible way by choosing `time zero'. This is first considered in \cite{farewell79}, who compare survival models for breast cancer incidence in women, taking the time origin as age, and alternatively as the time since birth of the first child. They find that taking time as age offers the most parsimonious solution. \cite{liestol02} also argue for the use of age or calendar time as the time origin when a time of incidence or initial observation is difficult to define, such as the situations we consider here. \cite{korn97} strongly advocate the use of age as the time origin in epidemiological studies, indeed they declare that using time since first observation as the time scale is `incorrect'.  

Using different time scales will lead to different estimates of parameters in the time to event model. \cite{thiebaut04} carry out an extensive simulation study on this issue, and also strongly advocate using age as the time origin rather than time on study.
On the other hand, \cite{fieberg09} advocate start of study as the most sensible time origin, following comparison of survival estimates using three different time origins: age, start of study and calendar time.

We do not discuss in this paper the related issue of having multiple scales in which time can be measured, for example age and mileage of a car \citep{oakes95}.

Suppose we have $n$ subjects. Denote by $s_i$ the age of individual $i$, which obviously changes with time. Suppose the individual is first observed at age $a_i$, at which point a series of covariates are extracted. Denote covariates that are not expected to vary with age by $\bm{x}_i$, and those that are expected to vary with age by $\bm{z}_{i,s_i}$, where the dependence on age is made explicit. The values of the time varying covariates that we actually observe are given by $\bm{z}_{i,a_i}$.
Let $t_i$ denote the time that has elapsed since individual $i$ was first observed.

This paper makes two contributions. First, we show formally why measuring time since observation, $t_i$, should be rejected in situations where no intervention is made at the time of observation. Second, we propose instead that the age of the individual, $s_i$, should be used as the time scale. As an extension to proportional hazards models, we propose a two-stage procedure called residual proportional hazards (RPH) regression, in order to correct for the fact that individuals' age-varying risk factors are observed at different ages $a_i$. Similarly, we propose the corresponding extension to the accelerated failure time model, which we call residual  accelerated failure time (RAFT) regression.

The paper is organised as follows. In Section \ref{sec:reject} we justify formally why we reject the idea of measuring time as time since first observation. In Section \ref{sec:method} we introduce RPH and RAFT regression. In Section \ref{sec:results} we demonstrate the performance of RAFT regression compared to existing methods of analysing time to event data. We conclude with a discussion in Section \ref{sec:discuss}.

\section{Measuring Time Since First Observation} \label{sec:reject}
In this Section we justify our stance that time should be measured as the age of the individual, $s_i$, rather than as time since first observation, $t_i$. This is a stance shared by many others \citep[see, for example,][]{breslow83,korn97,liestol02,thiebaut04}. We show, however, that using time-in-study as the time scale inevitably leads to elementary contradictions in situations where both the entry to study does not coincide with a risk modifying event, and age is expected to influence the time to the event of interest.

With $t$ denoting time to event, as measured from first observation,  let $m(t)$ denote the median time to event, or some other similar midpoint estimate of time to event (MTE). Let $s$ denote age of an individual. Consider comparing two individuals, $i$ and $j$, who are identical with respect to all their covariates except age. Without loss of generality assume that one individual is observed at age $s_i=a$, and the other is observed at age $s_j=0$. At least in situations where $t$ is the time after an observation that does not include an intervention, it seems reasonable to make the following two demands.

\begin{itemize}
\item The older person should have a shorter MTE:
\begin{equation} \label{eq:upper}
m(t|a) \leq m(t|0).
\end{equation}
When the event is naturally more common with advancing age, with all other things being equal, this is a reasonable assumption. Such is the case with common endpoints, such as death or disease incidence in epidemiological studies.

\item The fact that the older person has survived to age $a$ while avoiding the event should make his \emph{age} at the MTE greater than the age at the MTE of the younger person. This can be expressed as the inequality
\begin{equation} \label{eq:lower}
m(t|a)+a \geq m(t|0).
\end{equation} 
This requirement might be more subtle, but is logical in situations such as the following: suppose the event of interest is death from a chronic disease, and time is measured from a clinic appointment at recruitment to the study. Suppose we have two subjects, one older than the other, whose risk factors are otherwise identical. Then since the younger person is exposed to hazard between their current age, and the age of the older person, their expected age at death should be less.
\end{itemize}
Moreover, if we use time since first observation as the time scale, we should usually require that age should be included as a covariate in the model, since all other things being equal we would expect a failure event to occur to an older person sooner than a younger person.
We will now show that in a model where time is measured as time since first observation, it is not possible to both include age as a covariate, and satisfy the  two requirements specified by inequalities (\ref{eq:upper}) and (\ref{eq:lower}). We demonstrate this result for both AFT models and proportional hazards models.

\subsection{Proportional Hazards Model} \label{sec:cox}
Consider a proportional hazards model where age at observation, $a$, is the only covariate under consideration. This would be given by
$$
h(t;a) = \exp\big\{\beta f(a)\big\} h_0(t),
$$
for an individual of age $a$, for some $\beta$. Common choices for the age transformation function $f(\cdot)$ are $f(a) = a$ and $f(a) = \log(1+a)$ (where we add one to avoid the issue of taking logs of zero). Other covariates can be considered absorbed by the baseline hazard $h_0$. We are agnostic as to whether the baseline hazard is parametric or not. The survival function $S(t;a)$ is related to the hazard function via
$$
\frac{d}{dt} \log S(t;a) = -h(t;a),
$$
and hence the proportional hazards model implies that
$$
S(t;a) = S_0(t;a)^{\exp\{\beta f(a)\}},
$$
where $S_0$ denotes the baseline survival function.
A general form for the median of the proportional hazards model is therefore given by
\begin{equation} \label{eq:median-ph}
S^{-1}(0.5) = S_0^{-1}\big[0.5^{\exp\{-\beta f(a)\}}\big].
\end{equation}
Suppose that $f(a) = a$. See the remark at the end of this Section for $f(a) = \log(1+a)$.

\begin{claim} \label{cl:ph-byupper}
In order for the proportional hazards model to satisfy inequality (\ref{eq:upper}), we must have $\beta \geq 0$.
\end{claim}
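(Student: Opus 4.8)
The plan is to feed the explicit median formula (\ref{eq:median-ph}) into the requirement (\ref{eq:upper}) and then unwind the inequality by monotonicity. With $f(a)=a$, the median for an individual observed at age $a$ is
\begin{equation*}
m(t|a) = S_0^{-1}\bigl[0.5^{\exp\{-\beta a\}}\bigr],
\end{equation*}
while setting $a=0$ gives $m(t|0) = S_0^{-1}(0.5)$, since $\exp\{-\beta\cdot 0\}=1$. Thus (\ref{eq:upper}) becomes $S_0^{-1}\bigl[0.5^{\exp\{-\beta a\}}\bigr] \le S_0^{-1}(0.5)$.

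Next I would invoke the fact that $S_0$, being a survival function, is non-increasing, so its (generalised) inverse $S_0^{-1}$ is also non-increasing. Applying $S_0$ to both sides of the displayed inequality therefore reverses it, yielding the equivalent condition $0.5^{\exp\{-\beta a\}} \ge 0.5$. Since the map $u \mapsto 0.5^u$ is strictly decreasing (the base being less than one), this is in turn equivalent to $\exp\{-\beta a\} \le 1$, i.e.\ to $\beta a \ge 0$.

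Finally, since the comparison underlying (\ref{eq:upper}) is required to hold whatever the age separation of the two subjects — that is, for every $a>0$ — the inequality $\beta a \ge 0$ forces $\beta \ge 0$, which is exactly the claim. (Only the "only if" direction is needed here, matching the "must have" phrasing.)

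The one genuine subtlety, and the step most likely to trip up a careless write-up, is the bookkeeping of the two order-reversals: one coming from the monotonicity of $S_0^{-1}$, and one from the fact that $0.5^{(\cdot)}$ is decreasing because the base lies in $(0,1)$. I would also flag that $S_0^{-1}$ should be read as a generalised inverse when $S_0$ has flat portions, though this does not change the direction of any implication; and that strict monotonicity of $S_0$ is not needed for the argument.
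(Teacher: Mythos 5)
Your proposal is correct and follows essentially the same route as the paper: substitute the median formula (\ref{eq:median-ph}) into (\ref{eq:upper}), apply the decreasing function $S_0$ to reverse the inequality, and deduce $\exp(-\beta a)\leq 1$ hence $\beta\geq 0$ (the paper reaches that last step by taking logs and dividing by $\log(0.5)$, which is the same monotonicity observation you make about $u\mapsto 0.5^u$). Your remarks on the generalised inverse and the two order-reversals are sensible but add nothing beyond the paper's argument.
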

\begin{proof}

Substituting (\ref{eq:median-ph}) into inequality (\ref{eq:upper}) gives
$$
S_0^{-1}(0.5) \geq S_0^{-1} \{0.5^{\exp(-\beta a)}\}.
$$
We consider the consequences of this inequality holding for the value of $\beta$. First we can apply the baseline survival function $S_0$ to both sides, as $S_0$ is a decreasing function, yielding
$$
0.5 \leq 0.5^{\exp(-\beta a)}.
$$
Taking logs of both sides and dividing through by $\log(0.5)$ (which is negative) yields
$$
\exp(-\beta a) \leq 1,
$$
from which we conclude that we require $\beta \geq 0$  to satisfy inequality (\ref{eq:upper}).
\end{proof}

\begin{claim} \label{cl:ph-bylower}
In order for the proportional hazards model to satisfy inequality (\ref{eq:lower}), we must have $\beta \leq 0$.
\end{claim}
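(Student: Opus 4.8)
The plan is to follow the template of the proof of Claim~\ref{cl:ph-byupper} and then confront the one genuinely new feature, the additive age term $a$ on the left of (\ref{eq:lower}). First I would substitute the median formula (\ref{eq:median-ph}) (with $f(a)=a$) into (\ref{eq:lower}), noting that $m(t|0)=S_0^{-1}(0.5)$ because $\exp(-\beta\cdot 0)=1$; this gives
$$
S_0^{-1}\!\left[0.5^{\exp(-\beta a)}\right] + a \;\geq\; S_0^{-1}(0.5).
$$
The sufficiency direction is then immediate: if $\beta\le 0$ then $\exp(-\beta a)\ge 1$ for every $a\ge 0$, hence $0.5^{\exp(-\beta a)}\le 0.5$, and since $S_0^{-1}$ is decreasing the first term on the left is already at least $S_0^{-1}(0.5)$, so the inequality holds for any baseline. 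So the real content of the claim is the converse, that $\beta>0$ is untenable.

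For the converse, suppose $\beta>0$ and $a>0$. Then $\exp(-\beta a)<1$, so $p:=0.5^{\exp(-\beta a)}\in(0.5,1)$, and because $S_0^{-1}$ is decreasing we get $m(t|a)=S_0^{-1}(p)<S_0^{-1}(0.5)=m(t|0)$; thus (\ref{eq:lower}) forces
$$
a \;\geq\; S_0^{-1}(0.5) - S_0^{-1}(p).
$$
This is where the paper's deliberate agnosticism about $S_0$ is fatal to a positive $\beta$: with no structure imposed on the baseline, the quantile gap $S_0^{-1}(0.5)-S_0^{-1}(p)$ can be made to exceed any prescribed $a$ --- for instance by taking a baseline whose survival curve is nearly flat (equivalently, whose hazard nearly vanishes) between survival levels $p$ and $0.5$, or, even more bluntly, a baseline with a cure fraction lying in $(0.5,1)$, for which $S_0^{-1}(0.5)=\infty$ while $S_0^{-1}(p)$ is finite once $a$ is large enough. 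Either choice contradicts (\ref{eq:lower}), so we must have $\beta\le 0$.

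I expect the main obstacle to be exactly the handling of that extra $+a$: unlike in Claim~\ref{cl:ph-byupper}, one cannot apply $S_0$ to both sides and read off the sign of $\beta$, since $S_0(x+a)$ does not split into a function of $x$ times a function of $a$. The cleanest way past this is to lean on the requirement holding uniformly over admissible baselines, so that exhibiting a single offending baseline suffices. If instead one insists on keeping the baseline fixed, a quantitative variant works: differentiate $g(a):=m(t|a)+a-m(t|0)$ at $a=0$, where $g(0)=0$ and, by the inverse-function theorem applied to $S_0$, $g'(0)=1-\beta\log 2 / h_0\!\left(S_0^{-1}(0.5)\right)$; this is negative as soon as the baseline hazard at its own median is small, so $g(a)<0$ for small $a>0$ and again $\beta\le 0$ is forced. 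A closing remark should mirror the promised treatment of $f(a)=\log(1+a)$: the same argument goes through verbatim with $\exp(-\beta a)$ replaced throughout by $(1+a)^{-\beta}$.
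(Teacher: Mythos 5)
Your proposal is correct, and the ``quantitative variant'' you offer as a fallback is in substance the paper's own proof. The paper applies $S_0$ to both sides of the substituted inequality, Taylor-expands $S_0\{S_0^{-1}(0.5)-a\}$ about the median, and uses l'Hopital as $a\downarrow 0$ to reach $\beta \leq 2\kappa/\log(0.5)$ with $\kappa = S_0'\{S_0^{-1}(0.5)\}$; since $\kappa = -\tfrac12 h_0\{S_0^{-1}(0.5)\}$, that bound is exactly your condition $g'(0)\geq 0$, i.e.\ $\beta \leq h_0\{S_0^{-1}(0.5)\}/\log 2$. Where you genuinely diverge is in how the remaining positive slack is eliminated. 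The paper rescales the time axis to make $\kappa$ arbitrarily small while asserting that this leaves $\beta$ unchanged; you instead fix $\beta>0$ and $a>0$ and exhibit a baseline whose quantile gap $S_0^{-1}(0.5)-S_0^{-1}(p)$ exceeds $a$ (a near-flat stretch of the survival curve, or a cure fraction). These are two instantiations of the same underlying move --- the necessity of $\beta\leq 0$ only follows if (\ref{eq:lower}) is demanded uniformly over admissible baselines, equivalently over time units, and both arguments quietly rely on that --- but your global counterexample is arguably cleaner, since it needs neither the Taylor expansion nor the (debatable) assertion that rescaling time leaves $\beta$ fixed while altering only $h_0$. Your sufficiency paragraph is not required by the claim but is harmless, and your closing remark on $f(a)=\log(1+a)$ matches the paper's own remark, which notes that only $g'(0)$ in the l'Hopital step changes, and not its value.
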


\begin{proof}
Substituting (\ref{eq:median-ph}) into inequality (\ref{eq:lower}) gives
$$
S_0^{-1}(0.5) -  a \leq S_0^{-1} \big\{0.5^{\exp(-\beta a)}\big\}.
$$
Again, we consider the consequences of this inequality for $\beta$. Taking $S_0$ of both sides gives
$$
S_0\big\{S_0^{-1}(0.5) - a\big\}\geq 0.5^{\exp(-\beta a)}.
$$
Taking Taylor expansions of the left hand side about $S_0^{-1}(0.5)$ gives
$$
0.5 - \kappa a + O(a^2) \geq 0.5^{\exp(-\beta a)},
$$
where $\kappa = S_0'\big\{S_0^{-1}(0.5) \big\}$.
After standard manipulations, we then arrive at
\begin{equation} \label{eq:pre-hop}
\beta \leq - \frac{1}{a} \log \left[ \frac{\log\big\{0.5 - \kappa a + O(a^2) \big\}}{\log(0.5)} \right].
\end{equation}
We next note that inequality (\ref{eq:lower}) should hold for any reasonable $a$, and certainly for small values of $a$, so take limits as $a \downarrow 0$ of the right hand side of (\ref{eq:pre-hop}). This limit can be calculated using l'Hopital's rule,
$$
\lim_{a \downarrow 0} \frac{f(a)}{g(a)} = \frac{f'(0)}{g'(0)}.
$$
 taking
$$
f(a) = -\log \left[ \frac{\log\big\{0.5 - \kappa a + O(a^2) \big\}}{\log(0.5)} \right], \ \ g(a) = a.
$$
Applying this, we get
\begin{equation} \label{eq:pre-rescale}
\beta \leq \frac{2\kappa}{\log(0.5)}.
\end{equation}
Finally, recall that $\kappa = S_0'\big\{S_0^{-1}(0.5) \big\}$. It is a derivative of a survival function, and therefore must be non-positive. Since $\log(0.5)$ is negative, overall the right hand side of (\ref{eq:pre-rescale}) is non-negative.  As a derivative, $\kappa$ can be made arbitrarily small by rescaling the time axis. Such a time rescaling affects the baseline hazard, $h_0(t)$ only, and not the magnitude of $\beta$. Therefore we conclude
$\beta \leq 0$ to satisfy inequality (\ref{eq:lower}).
\end{proof}

Combining Claims \ref{cl:ph-byupper} and \ref{cl:ph-bylower} we must conclude that, in order to satisfy inequalities (\ref{eq:upper}) and (\ref{eq:lower}), we must have $\beta = 0$. In other words, having an age covariate is inconsistent with these requirements.

\begin{rk}
Setting $f(a) = \log(1+a)$ gives the same result. This is clear in Claim \ref{cl:ph-byupper}. For Claim \ref{cl:ph-bylower}, the argument would proceed identically up to (\ref{eq:pre-hop}). The only difference is that we have $g(a) = \log(1+a)$ rather than $g(a) = a$ for l'Hopital's rule, but these both yield the same $g'(0)$.
\end{rk}

\subsection{Accelerated Failure Time Model} \label{sec:aft}
An accelerated failure time (AFT) model is typically a log-linear model for the survival time. (Or, if not the survival time directly, the scale parameter, which is proportional to the survival time). Consider an AFT model where time is measured as time since first observation, and age, $a$, is the only covariate under consideration. This would be given by
\begin{equation} \label{eq:aft}
m(t) = \exp\big\{\alpha + \beta f(a)\big\}.
\end{equation}
Here, $\alpha$ is used to denote any other terms in the model, so may depend on background risk factors, etc. As before, the age transformation function $f(\cdot)$ is commonly the identity, $f(a) = a$; or a log transform, $f(a) = \log(1+a)$.
As before, we consider comparing two individuals, $i$ and $j$, one of age $s_i=a$, the other of age $s_j=0$, with all other covariates identical.

The following lemma will be useful.

\begin{lemma} \label{lem:timescale}
In the AFT model of Equation (\ref{eq:aft}), making an arbitrary change of factor $k$ to the time scale will alter the value of $\alpha$ but not $\beta$. Therefore make explicit this dependency by writing $\alpha(k)$. Then moreover, for any $\epsilon > 0$ there exists a scaling factor $k_\epsilon$ such that $\exp\big\{-\alpha(k_\epsilon)\big\} < \epsilon$.
\end{lemma}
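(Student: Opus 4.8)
The plan is to exploit the multiplicative, log-linear structure of the AFT model. First I would make precise what ``a change of factor $k$ to the time scale'' means: we replace the time-to-event variable $t$ by $kt$ for some $k>0$ (a change of measurement units), while leaving the covariate $a$ — age, recorded in its original units — untouched. Since the median is equivariant under increasing transformations, rescaling $t$ by $k$ rescales the median time to event by the same factor, so $m$ becomes $k\,m(t) = k\exp\{\alpha+\beta f(a)\} = \exp\{\log k + \alpha + \beta f(a)\}$. Reading off the parameters of this re-expressed model, the term multiplying $f(a)$ is still $\beta$, whereas the remaining term has changed from $\alpha$ to $\alpha+\log k$. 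Writing $\alpha(1)$ for the value on the original scale, this gives $\alpha(k)=\alpha(1)+\log k$, which establishes the first assertion: $\beta$ is invariant under the rescaling while $\alpha$ genuinely depends on $k$.

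For the second assertion I would simply invert this relation. We have $\exp\{-\alpha(k)\} = \exp\{-\alpha(1)\}\,k^{-1}$, which is strictly decreasing in $k$ and tends to $0$ as $k\to\infty$. Hence, given any $\epsilon>0$, it suffices to take any $k_\epsilon > \exp\{-\alpha(1)\}/\epsilon$; for instance $k_\epsilon = \exp\{-\alpha(1)\}/\epsilon + 1$ yields $\exp\{-\alpha(k_\epsilon)\} < \epsilon$, as required.

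The only point that needs a little care — and the nearest thing to an obstacle — is justifying that $\beta$ really is unaffected by the rescaling. The cleanest way to see this is to note that an AFT model has the log-linear representation $\log t = \text{(location)} + \beta f(a) + \text{(scaled error)}$, so replacing $t$ by $kt$ merely adds $\log k$ to the location term and leaves the coefficient of $f(a)$ and the error scale alone. This is exactly analogous to the remark in the proportional hazards argument that rescaling time affects only $h_0$ and not the magnitude of $\beta$. Everything else is a one-line computation, and the lemma will presumably be used — just as Claim~\ref{cl:ph-bylower} was used in the proportional hazards case — to drive $\exp(-\alpha)$ below whatever threshold is needed to force a contradiction with inequalities (\ref{eq:upper}) and (\ref{eq:lower}) in the AFT setting.
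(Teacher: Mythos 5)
Your proof is correct and takes essentially the same route as the paper: rescale $t \mapsto kt$, absorb $\log k$ into the intercept so that $\alpha(k)=\alpha+\log k$ while $\beta$ is untouched, and then choose $k$ large enough that $\exp\{-\alpha(k)\}=\exp(-\alpha)/k<\epsilon$. Indeed your explicit choice $k_\epsilon>\exp\{-\alpha(1)\}/\epsilon$ has the inequality pointing the right way, whereas the paper's stated condition $k_\epsilon<\epsilon\exp(\alpha)$ reads as a slip for $k_\epsilon^{-1}<\epsilon\exp(\alpha)$.
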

\begin{proof}
Rescaling time by factor $k$ leads to 
$$
m(kt) = km(t) = k\exp\big\{\alpha + \beta f(a)\big\}.
$$
Setting $\alpha' = \log(k) + \alpha$ gives
$$
m(kt) = \exp\big\{\alpha' + \beta f(a)\big\},
$$
proving the first part of the lemma. For the second part, it suffices to take $k_\epsilon < \epsilon\exp(\alpha)$.
\end{proof}

As in the previous Section, we now consider the case where $f(a) = a$, and relegate comments about the case $f(a) = \log(1+a)$ to a remark at the end of this Section.

\begin{claim} \label{cl:aft-byupper}
In order for the accelerated failure time to satisfy inequality (\ref{eq:upper}), we must have $\beta \leq 0$.
\end{claim}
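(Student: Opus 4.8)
The plan is to substitute the explicit form of the AFT median from Equation (\ref{eq:aft}) directly into inequality (\ref{eq:upper}) and read off the sign constraint on $\beta$. With $f(a)=a$, comparing the two individuals gives $m(t\mid a) = \exp\{\alpha + \beta a\}$ and $m(t\mid 0) = \exp\{\alpha\}$, where the value of $\alpha$ is the same for both since their remaining covariates are identical. So inequality (\ref{eq:upper}) becomes
$$
\exp\{\alpha + \beta a\} \leq \exp\{\alpha\}.
$$

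Next I would divide both sides by $\exp\{\alpha\}$, which is legitimate because the exponential is strictly positive, to obtain $\exp\{\beta a\} \leq 1$. Taking logarithms of both sides (the logarithm being increasing) yields $\beta a \leq 0$. Since the older individual has age $a > 0$, we may divide through by $a$ to conclude $\beta \leq 0$, which is the claim. One small point worth stating explicitly is that inequality (\ref{eq:upper}) is required to hold for the comparison of any such pair of individuals, hence for some (indeed every) positive $a$, which is all that is needed to pin down the sign.

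In contrast to Claim \ref{cl:ph-bylower}, I do not expect any genuine obstacle here: the argument is a one-line manipulation once the model form is plugged in, with no Taylor expansion or limiting argument needed, and Lemma \ref{lem:timescale} is not invoked at this stage — it will instead be the key tool for the companion claim dealing with inequality (\ref{eq:lower}), where the additive shift by $a$ interacts nontrivially with the scale of $\alpha$. The only thing to be careful about is the direction of the inequality when taking logarithms and when dividing by the positive quantities $\exp\{\alpha\}$ and $a$, but each of these steps preserves the inequality, so the deduction $\beta \leq 0$ is immediate. A remark at the end can note that the case $f(a) = \log(1+a)$ is handled identically, since $\log(1+a) > 0$ for $a > 0$ and the same cancellation goes through.
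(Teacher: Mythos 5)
Your proposal is correct and follows exactly the paper's own route: substitute the AFT median form into inequality (\ref{eq:upper}) to get $\exp(\alpha+\beta a)\leq\exp(\alpha)$ and read off $\beta\leq 0$, with your extra steps (dividing by $\exp(\alpha)$, taking logs, dividing by $a>0$) merely spelling out what the paper calls ``immediate.'' Your side observations — that Lemma \ref{lem:timescale} is only needed for Claim \ref{cl:aft-bylower} and that $f(a)=\log(1+a)$ goes through identically — also match the paper's remarks.
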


\begin{proof}
Substituting (\ref{eq:aft}) into inequality (\ref{eq:upper}) gives
$$
\exp(\alpha + \beta a) \leq \exp(\alpha),
$$
which immediately gives $\beta \leq 0$.
\end{proof}

\begin{claim} \label{cl:aft-bylower}
In order for the accelerated failure time to satisfy inequality (\ref{eq:lower}), we must have $\beta \geq 0$.
\end{claim}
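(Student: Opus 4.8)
The plan is to follow the template of the proof of Claim~\ref{cl:ph-bylower}: substitute the model into inequality~(\ref{eq:lower}), solve the resulting inequality for $\beta$, extract the tightest constraint by letting $a\downarrow 0$, and then invoke Lemma~\ref{lem:timescale} to rescale time until that constraint collapses to $\beta\geq 0$.

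Concretely, with $f(a)=a$, substituting~(\ref{eq:aft}) into~(\ref{eq:lower}) gives $\exp(\alpha+\beta a)+a\geq\exp(\alpha)$, i.e.\ $\exp(\alpha+\beta a)\geq\exp(\alpha)-a$. When $a\geq\exp(\alpha)$ the right-hand side is non-positive and the inequality is automatic, so the only content lies in the range $0<a<\exp(\alpha)$; there both sides are positive, and taking logs and then dividing through by $a>0$ yields
\[
\beta\ \geq\ \frac{1}{a}\,\log\bigl\{1-a\exp(-\alpha)\bigr\}.
\]
As in the proof of Claim~\ref{cl:ph-bylower}, inequality~(\ref{eq:lower}) ought to hold for every reasonable $a$, and in particular for arbitrarily small $a$, so I take the limit of the right-hand side as $a\downarrow 0$. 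Both numerator and denominator vanish at $a=0$, so l'Hopital's rule (equivalently, a first-order Taylor expansion of $\log\{1-a\exp(-\alpha)\}$) gives the limit $-\exp(-\alpha)$, and hence $\beta\geq-\exp(-\alpha)$.

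It then remains to squeeze this down to $\beta\geq 0$, which is exactly what Lemma~\ref{lem:timescale} delivers: a change of time scale alters $\alpha$ to $\alpha(k)$ while leaving $\beta$ unchanged, and $k$ can be chosen so that $\exp\{-\alpha(k_\epsilon)\}<\epsilon$ for any prescribed $\epsilon>0$. Since a mere change of time units cannot invalidate a model that genuinely respects~(\ref{eq:lower}), the derived bound must read $\beta\geq-\exp\{-\alpha(k_\epsilon)\}>-\epsilon$ for every $\epsilon>0$, forcing $\beta\geq 0$. The case $f(a)=\log(1+a)$ goes through verbatim, as in the remark after Claim~\ref{cl:ph-bylower}: only the l'Hopital denominator changes, from $a$ to $\log(1+a)$, and these share the same derivative at $0$. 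I do not anticipate a real obstacle; the only points needing care are restricting attention to $0<a<\exp(\alpha)$ (so that the rearrangement and the logarithm are legitimate), and recognising — just as in the proportional hazards argument — that the intermediate bound $\beta\geq-\exp(-\alpha)$ is too weak on any fixed time scale and becomes the desired $\beta\geq 0$ only after the rescaling step. Combined with Claim~\ref{cl:aft-byupper}, this again forces $\beta=0$.
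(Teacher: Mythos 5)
Your proposal is correct and follows essentially the same route as the paper's own proof: substitute the AFT model into inequality~(\ref{eq:lower}), rearrange to $\beta\geq a^{-1}\log(1-ae^{-\alpha})$, let $a\downarrow 0$ to obtain $\beta\geq-e^{-\alpha}$, and then apply Lemma~\ref{lem:timescale} to tighten this to $\beta\geq 0$. Your added care about restricting to $0<a<\exp(\alpha)$ so the logarithm is well defined is a minor refinement the paper leaves implicit.
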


\begin{proof}
Substituting (\ref{eq:aft}) into inequality (\ref{eq:lower}) gives
$$
\exp(\alpha)-a \leq \exp(\alpha + \beta a).
$$
Re-arranging,
$$
\beta \geq \frac{\log(1-ae^{-\alpha})}{a}.
$$
Now we require this for all $a>0$; taking limits as $a \downarrow 0$ yields
$$
\beta \geq -e^{-\alpha}.
$$
However, by Lemma \ref{lem:timescale}, for every $\beta<0$ there exists a time rescaling that gives an $\alpha$ violating this inequality, hence we must conclude
$\beta \geq 0$.
\end{proof}

Combining Claims \ref{cl:aft-byupper} and \ref{cl:aft-bylower} we must conclude that, again, in order to satisfy inequalities (\ref{eq:upper}) and (\ref{eq:lower}), we must have $\beta = 0$.

\begin{rk}
The same solution is reached for $f(a) = \log(1+a)$. This is trivial for Claim \ref{cl:aft-byupper}, and for Claim \ref{cl:aft-bylower} it follows because $\log(1+a) \approx a$ for small $a$.
\end{rk}

The problems of having two measurements of time in a survival model, both advancing at the same rate, are alluded to by \cite{liestol02}, who call this an `exchangeable' effect, i.e.\ there is an information redundancy between the two time measurements, $s_i$ and $t_i$.

\subsection{Example} \label{sec:eg}
The following example is based on the model of \cite{wilson08}. They construct a log-linear model for the scale parameter, $\lambda(x)$ say, of a Weibull density, which is proportional to the expected survival time, where the outcome is heart attack or stroke. Survival time is measured as time since an observation is made, and age at observation is included as a covariate.
Precisely, their model is
\begin{align*}
\lambda(x) = \exp\{ & 14.9756	-0.0159 \times \textrm{body mass index}\\ 
&-0.0571 \times \textrm{age in years} -0.4959 \times \textrm{smoker}\\
& -0.0070 \times \textrm{systolic blood pressure}\\ 
& -0.1432\times \textrm{total to HDL cholesterol ratio}\\
&	-0.3421 \times \textrm{diabetic} + 0.5139 \times \textrm{male}\}.
\end{align*}
Consider a reasonably healthy person, say body mass index = 20, a non-smoker, systolic blood pressure = 120, total to HDL cholesterol ratio = 3.5, non diabetic and male.
An example of  a counterintuitive result is that a 50 year old with the above risk factors and characteristics has
$$
P[\textrm{event over next 25 years}]=P[\textrm{event between age 50 and 75}]= 0.147.
$$ 
Yet, a 55 year old with the same characteristics has
$$
P[\textrm{event over next 20 years}]=P[\textrm{event between age 55 and 75}]= 0.159.
$$
This is clearly rather odd, as one could imagine this is the same person five years later, who remains in good health and has not had the event, yet their probability of an event by age 75 has increased.
This demonstrates when it is not sensible to allow inequality (\ref{eq:lower}) to be violated.

\section{RPH and RAFT Regression} \label{sec:method}
Many risk factors are expected to rise with advancing age. For example, blood pressure and cholesterol generally increase with age. This could be problematic for a time-to-event regression where age is time, since the levels of some risk factors may depend upon the age of the individual when they were observed, $a_i$. Not adjusting for this would give people observed at an older age a worse prognosis, simply because their risk factors were higher when measured. A younger person may get a better prognosis as his risk factors are lower when measured, despite the expected trajectory of those risk factors being higher than the older person. It therefore makes sense to correct for the age at which the risk factors have been measured.

We propose a two-stage procedure. In the first stage, risk factors that vary with time are regressed marginally with age, and the residuals from this regression retained. In the second stage, time to event regression is carried out (via either AFT or proportional hazards), but using the residuals of the time varying risk factors as covariates rather than their individual values.

\subsection{Stage 1}
Using the same notation as before, for each individual we have a pair $(a_i,\bm{z}_{i,a_i})$, consisting of the age at which the first observation was made, and the values of the age-varying risk factors at that age. For simplicity of exposition, suppose there is just one  age-varying risk factor, $z_{i,a_i}$. Suppose that the age-varying effect can be captured by a function $f$ that is common to all individuals $i=1,\ldots,n$:
\begin{equation} \label{eq:raft-resid}
z_{i,s_i} = f(s_i) + e_i,
\end{equation}
and of course, this is true in particular for $s_i = a_i$. We can therefore consider estimating $f$ by regressing $\{z_{i,a_i},i=1,\ldots,n\}$ on $\{a_i,i=1,\ldots,n\}$. For simplicity, in this paper we consider the case where $f$ is (or at least well approximated by) a linear function of age. After estimating $f$ we can recover the associated residuals $\{e_i, i=1,\ldots,n\}$. Assuming that $f$ correctly captures the dependence between the risk factor level and age, we can assume that these residuals are independent of age.

If we have multiple risk factors, $p$ say, we simply repeat the procedure outlined above, marginally for each risk factor, to obtain residuals for each age-varying risk factor, $\bm{e}_i = (e^1_i,\ldots, e^p_i)$, for each individual $i=1,\ldots,n$.

One could view this stage of the procedure as defining individuals by their percentiles in the age-specific risk factor distributions, which are likely to be more stable over age and time than the risk factors themselves.

\subsection{Stage 2}
In the second stage of the procedure, we  carry out the desired time to event regression. We suggest here proportional hazards regression or AFT regression, but any time to event regression methods would be valid.  Instead of using the age-varying risk factors as covariates, we use their residuals from Stage 1. Age is taken as the time scale. 

For AFT, the model is log-linear for a scale parameter $\lambda_i$, given by
\begin{equation} \label{eq:raft-aft}
\log(\lambda_i) = \beta_0 + \beta_x \bm{x}_i + \beta_z \bm{e}_i + \epsilon_i,
\end{equation}
where $\bm{x}_i$ is the vector of fixed risk factors, $\bm{e}_i$ are the residuals of the age-varying risk factors derived in stage 1, and the $\beta$s are the unknown parameters to be estimated.
Typically the scale parameter $\lambda_i$ is equal to, or at least in proportion to, the failure age $s_i$. There is also a shape parameter that is usually assumed to be fixed. Various (strictly non-negative) parametric baseline distributions are generally considered for AFT regression, such as gamma, Weibull, log-logistic, log-normal, Gompertz and extreme value. The distribution of the error term $\epsilon_i$ depends on the choice of baseline distribution.
In this case we call the entire procedure is called residual accelerated failure time (RAFT) regression. 

A proportional hazards model is log-linear in the hazard function of the \kth{i} individual $h_i$. The baseline hazard, $h_0$,  may or may not be specified parametrically. This model can be written as
\begin{equation} \label{eq:rph}
\log(h_i) = h_0 + \beta_x \bm{x}_i + \beta_z \bm{e}_i + \epsilon_i,
\end{equation}
where the terms are as for the AFT model.
In this case we call the entire procedure is called residual proportional hazards (RPH) regression. 

\section{Examples} \label{sec:results}
In this Section we compare the predictive ability of three different AFT-based regression methods:

\begin{enumerate}
\item Age as a covariate, time is measured as time from observation (denoted AFT-AC).
\item Age as time, age-varying covariates not adjusted for age (denoted AFT-NA).
\item Age as time, age-varying covariates adjusted for age by taking residuals from a regression on age (denoted RAFT).
\end{enumerate}

We focus on the comparison of AFT models: since AFT is a log-linear model for a parameter of the presumed distribution, they retain the original distributional form.  Proportional hazards models, on the other hand, only retain their original distribution if that distribution possesses the proportional hazards property. The three distributions we consider here are Weibull, log-normal and log-logistic. When a Weibull distribution is assumed, proportional hazards models and AFT models are equivalent.

We take as our data the US Health and Nutrition Examination Survey (NHANES-I) \citep{nhanes73}, and the outcome we are interested in is the time to the first non-fatal myocardial infarction (NFMI). Accelerated failure time regression models are fitted to the data, using the three baseline distributions, in each of the three paradigms discussed above. 

The predictive ability of the models is compared by splitting the cohort randomly in half, estimating the models on the first half (training data), and attempting to predict failure times in the second half (validation data).  
Specifically, we use the Brier score \citep{brier50} as the measure of predictive ability of the models.  Other measures of predictive ability of a survival model are considered in \cite{korn90}. We discretize time into years. Let $n$ be the number of individuals.  Let $s_i$ denote the time in years at which individual $i$ is first observed. For the AFT-AC model, $s_i = 1$ for all $i$, whilst for the other two models $s_i$ is the age at first observation, denoted $a_i$. Let $t_i = \min(c_i,M)$, where $c_i$ denotes the time in years at which individual $i$ is censored (with $c_i = \infty$ if censoring does not occur), and $M$ is the maximum time in years at which observations can be made. For the AFT-AC model, $M$ is the maximum follow-up time over the $i$ individuals, whilst for the other two models $M$ is the maximum age reached by an individual whilst still alive and not lost to follow-up. The outcome of interest is whether NFMI has occurred. Therefore, let 
$$
o_{ij} = \left\{\begin{array}{ll} 1 &\textrm{Individual $i$ has had NFMI by time $j$}\\
0 &\textrm{otherwise}. \end{array} \right.
$$
Let $g_{ij}$ be the predicted probability from the model that NFMI has occurred. This is given by
$$
g_{ij} = \left\{\begin{array}{ll} F(j) &\textrm{for AFT-AC}\\
\frac{F(j+a_i) - F(a_i)}{1-F(a_i)} &\textrm{for AFT-NA and RAFT,} \end{array}   \right.
$$
where $F$ is the cumulative distribution function of the assumed distribution.
Finally, let $N = \sum_{i=1}^n (t_i - s_i)$. Then we calculate the Brier score for predictive ability as
$$
B = \frac{1}{N}\sum_{i=1}^n \sum_{j=s_i}^{t_i} (g_{ij} - o_{ij})^2.
$$
A model with good predictive ability should produce a small Brier score.

We repeat this procedure 100 times, with random splits each time into the training and validation set, and report as the final Brier score the average of the 100 runs. The results are given in Table \ref{tab:predict-comp}.

[Table 1 about here]

Of the three methods, AFT-NA gives the largest Brier score in all situations. This is not surprising, since this method makes no correction for the age of the participant at observation. AFT-AC gives a moderate Brier score; this method corrects for age, but in a manner that has been shown, in Section \ref{sec:reject}, to be incoherent. RAFT always gives the lowest Brier score, so is the optimal model of the three for this data; this model corrects for age in a coherent manner. It is also noteworthy that there is stability in the Brier scores for RAFT across the three distributions that have been considered here.

\section{Discussion} \label{sec:discuss}
We have shown that measuring time from a baseline that consists of an observation of an individual, but no intervention, is not sensible. In such situations, age should always be chosen as the time scale. On the other hand, it can make sense to measure the time from a baseline that consists of an intervention, since in this case it is entirely reasonable that inequality (\ref{eq:lower}) is violated; for example, in randomised controlled trials. The modeller should, however, be aware that inequality (\ref{eq:lower}) is violated in this case, and think carefully about whether this is reasonable.

We propose new methods called residual proportional hazards (RPH) and residual accelerated failure time (RAFT) regression. These are two stage procedures: the first stage regresses age-varying risk factors on age to eliminate the effect of patients entering the study at different ages; the second stage carries out time to event regression (either AFT regression or proportional hazards regression), replacing the age-varying risk factors by their residuals derived from the first stage.

The additional predictive capability obtained from this two-stage procedure have implications for communicating risk to patients. Existing risk scores, commonly based on AFT-AC (AFT with age as a covariate, and the time origin as time of first observation) are shown in Section \ref{sec:results} to have weaker predictive ability, and in Section \ref{sec:reject} to lead to logical contradictions in the risk score assigned. Therefore, we believe that risk scores based on the methods outlined in this paper will lead to improved mortality estimates, which will enable better decisions to be made, both at the policy level and the individual patient level.

Epidemiologists often refer to three principal sets of factors: time; place; and person. We note that RAFT handles two dimensions of time in a way that is analogous to the multi-stage regression methods that the field of Geostatistics employs to handle two dimensions of space \citep{diggle07}. 

Extensions to RAFT regression can be considered. First, in this paper we have considered only linear forms for the function $f$ in Equation (\ref{eq:raft-resid}). Clearly this will often be unrealistic. Non-parametric estimates of this function could be considered, for example. In addition, one may wish to consider interactions between the age-risk factor relationship for multiple age-varying risk factors. This would involve replacing a series of marginal models of the form given in Equation (\ref{eq:raft-resid}) by a joint model. 

Second, the estimation of $f$ is prone to a `survivor bias'. Subjects with higher values of risk factors tend to die younger, hence subjects recruited at an older age have probably had lower risk factor trajectories, on average, than those recruited at a younger age. One could correct for this by constructing a procedure that iterates between Stage 1 and Stage 2 of the method outlined in this paper. An estimate of the risk factor effect on survival can be estimated from Stage 2, which can then be passed back to Stage 1 to account for this bias in estimating $f$; then the risk factor effects re-estimated in Stage 2 based on the new residuals, and so on. Note that this would also require the risk factor effect on deaths from other causes (causes not of interest) and other censoring mechanisms to be estimated in the second stage.

Finally, we have not considered the propogation of uncertainty from the first stage of the RAFT regression procedure to the second, which may lead to standard errors being underestimated.

\section{Funding}
This work was funded by the National Institute for Health Research grant Greater Manchester Collaboration for Applied Health Research and Care, in collaboration with the Medical Research Council IMPACT project.

\section{Acknowledgements}
The authors would like to thank David Spigelhalter for useful comments.

\bibliographystyle{jtbnew}
\bibliography{bib-raft}

\begin{table}[p]
\caption{Mean Brier scores \label{tab:predict-comp}}
\begin{center}
\begin{tabular}{cccc}
Baseline &AFT-AC &AFT-NA &RAFT\\
\hline
Weibull &0.0217 &0.0605 &0.0149\\
Log-Normal &0.0221 &0.0384 &0.0148\\
Log-Logistic &0.0160 &0.0180 &0.0148\\
\end{tabular}
\end{center}
\end{table}

\end{document}